\newcommand{\Rd}{{\mathord{\mathbb R}^d}}
\newcommand{\id}{{\mathop{\rm \mathbf{id} }}}
\def\P{{\mathcal P}}
\newcommand{\bt}{\mathbf{t}}
\newcommand{\bomega}{\boldsymbol{\omega}}
\newcommand{\bgamma}{\boldsymbol{\gamma}}
\newcommand{\E}{\mathcal{E}}
\newcommand{\tcE}{{\tilde{\mathcal{E}}}}
\newcommand{\tx}{{\tilde{x}}}
\newcommand{\tE}{{\tilde{E}}}
\newcommand{\R}{\mathcal{R}}
\newcommand{\brho}{\boldsymbol{\rho}}
\newcommand{\tbrho}{{\tilde{\boldsymbol{\rho}}}}
\def\P{{\mathcal P}}
\newtheorem{prop}{Proposition}
\newtheorem{cor}{Corollary}
\begin{document}

\title{Linearized Optimal Transport for Collider Events}

\author{Tianji Cai}
\affiliation{Department of Physics, University of California, Santa Barbara, CA 93106, USA}
\author{Junyi Cheng}
\affiliation{Department of Physics, University of California, Santa Barbara, CA 93106, USA}
\author{Katy Craig}
\affiliation{Department of Mathematics, University of California, Santa Barbara, CA 93106, USA}
\author{Nathaniel Craig}
\affiliation{Department of Physics, University of California, Santa Barbara, CA 93106, USA}

\date{\today}

\begin{abstract}
We introduce an efficient framework for computing the distance between collider events using the tools of Linearized Optimal Transport (LOT). This preserves many of the advantages of the recently-introduced Energy Mover's Distance, which quantifies the “work” required to rearrange one event into another, while significantly reducing the computational cost. It also furnishes a Euclidean embedding amenable to simple machine learning algorithms and visualization techniques, which we demonstrate in a variety of jet tagging examples. The LOT approximation lowers the threshold for diverse applications of the theory of optimal transport to collider physics. 
\end{abstract}

\maketitle

\section{Introduction \label{sec:intro}}
What is the distance between collider events? This question, although simple to pose, is notoriously difficult to answer. Identical events at parton level can appear to differ upon reconstruction due to soft or collinear emission, while topologically distinct events at parton level can appear identical upon reconstruction, depending on the degree of coarse-graining.  Despite such challenges, the value of a well-defined distance is clear: the comparison of collider events, or the reconstructed objects contained therein, is an essential step in extracting physics from collider data. 

Significant progress was made towards defining a useful metric on the space of collider events in \cite{Komiske:2019fks}, where the ``Energy Mover's Distance'' (EMD) was introduced to compare the energy flow between events. Properly speaking, the Energy Mover's Distance is an adaptation of the Earth Mover's Distance, itself an example of the $p$-Wasserstein distance appearing in the theory of optimal transport.  Intuitively, the $p$-Wasserstein distance between two normalized energy distributions represents the minimal amount of work required to rearrange one distribution to look like the other, and may be modified (as in the EMD of \cite{Komiske:2019fks}) to accommodate events with different total energies. 

As observed in \cite{Komiske:2019fks} and further developed in \cite{Komiske:2020qhg}, the EMD has numerous applications to collider physics.
Among other things, it provides a new perspective on existing jet variables; implies inequalities satisfied nonperturbatively by jet observables;  and enables the definition of a distance between theories (where theories are defined as collections of events weighted by cross sections). From a practical perspective, the EMD defines new quantities associated with collider events that can be used as input to machine learning (ML) algorithms  and leveraged in collider analyses, providing a novel intermediary between simple analytic variables and deep neural networks. The EMD defined in \cite{Komiske:2019fks} has been subsequently applied to distance-based analysis of jets in CMS Open Data \cite{Komiske:2019jim}, to the definition of a new `event isotropy' shape variable \cite{Cesarotti:2020hwb}, as a metric for variational autoencoder-based anomalous jet tagging \cite{Cheng:2020dal}, and (with suitable generalization) to discrimination at the full event level \cite{Romao:2020ojy}. A number of other metrics for collider events have been explored in \cite{Mullin:2019mmh}. Broadly speaking, the many applications of the EMD pursued in \cite{Komiske:2019fks, Komiske:2019jim, Komiske:2020qhg, Cesarotti:2020hwb, Romao:2020ojy, Cheng:2020dal} highlight the potential relevance of tools from the theory of optimal transport for collider physics. 
 
However, one of the major practical challenges to the use of EMD in analyzing collider events is the computational cost; for a data set containing $N_{\rm evt}$ events, computing the pairwise distance between all events is $\mathcal{O}(N_{\rm evt}^2)$\footnote{The possibility of reducing such classical $\mathcal{O}(N_{\rm evt}^2)$ strategies to $\mathcal{O}(N_{\rm evt})$ quantum algorithms was pointed out in \cite{Wei:2019rqy}.}. This poses a challenge given that computing the $p$-Wasserstein distance between two events itself takes fractions of a second, putting the calculation of EMDs between events in typical collider data sets beyond the reach of desktop computers. It is also unsuitable for use with ML methods that require more structure than just the pairwise distances between events. 
 
In this paper, we define an efficient framework for computing the distance between collider events by applying the tools of Linearized Optimal Transport (LOT), preserving the many advantages of the EMD while significantly reducing the computational cost and furnishing a Euclidean embedding suitable for use in a wide range of ML algorithms. In particular, we implement the LOT approximation of the 2-Wasserstein distance, as   introduced in \cite{wang2013linear}. To the extent that the 2-Wasserstein distance has a pseudo-Riemannian structure (unlike $p$-Wasserstein distances with $p \neq 2$, including the $p=1$ Earth Mover's Distance), the LOT approximation amounts to projecting  onto the 2-Wasserstein tangent plane at a chosen reference event and computing simpler $\ell^2$ distances on that plane. We make this point of view rigorous in the appendix, where we prove that, as the reference event in the LOT approximation is refined, LOT converges to the distance between events on the  tangent plane, which provides a well-defined metric on the space of events. 

The LOT approach vastly speeds up the computation of optimal transport distances between collections of $N_{\rm evt}$ events by requiring the determination of only $\mathcal{O}(N_{\rm evt})$ computationally intensive $p$-Wasserstein distances, followed by $\mathcal{O}(N_{\rm evt}^2)$ computationally efficient $\ell^2$ distances.\footnote{Another pseudo-Riemannian structure, reminiscent of the $2$-Wasserstein metric, has also been used to reduce the computational complexity of multi-particle correlators  \cite{Komiske:2019asc}.}   In practice, replacing the traditional optimal transport computation with this linear version reduces the computational effort of the classification task   from a computer cluster to a single PC. Even with this dramatic reduction in computational time, we still achieve comparable accuracy to previous work using the original Wasserstein distances on the classification task. 

Beyond the significant computational speedup, LOT provides an isometric linear embedding into Euclidean space, suitable for use in a wider range of ML algorithms. We demonstrate its utility as input to ML algorithms tasked with discriminating between samples of boosted jets containing diverse Standard Model (SM) and beyond-Standard Model (BSM) particles. Due to the fact that our ML models lack the expressivity of deep neural networks, they will not, in general, achieve the same levels of accuracy. Instead, our approach offers a much clearer interpretation in terms of the underlying physics, while still achieving very good levels of accuracy. For example, it can provide answers to questions regarding what properties are most important in distinguishing them from each other; see Figure \ref{fig:LDA_sigmas_combined_Wt10000}.

This paper is organized as follows: In Section \ref{sec:lot} we review the $p$-Wasserstein distance and the Linearized Optimal Transport approximation to the $p=2$ distance, framed in terms suitable for application to collider events. We then illustrate features of the LOT approximation in the context of jet tagging in Section \ref{sec:objects}, computing LOT pseudo-distances between various classes of boosted jets using an isotropic (in cylindrical coordinates) distribution as a reference event. The utility of   LOT   as an input to simple machine learning algorithms is highlighted in Section \ref{sec:ml}, where we explore the performance of linear discriminate analysis (LDA), $k$-nearest neighbor (kNN), support vector machine (SVM), and $k$-medoids clustering algorithms in the pairwise classification of boosted QCD, $W$, $t$, Higgs, and BSM jets. The comparable performance of models respectively coupled with the LOT and EMD metrics suggests that the former approximation matches the discriminating power of the latter metric while offering considerable computational speedup. It is also readily amenable to visualization, which we demonstrate in a number of examples. We conclude and enumerate a variety of future directions in Section  \ref{sec:conc}. A proof of the convergence of the LOT approximation to a true metric in the continuum limit is reserved for the Appendix.

\section{Linearized Optimal Transport \label{sec:lot}}
 Let an event $\E$ denote a collection of particles at locations $x_i$ in a rectangular domain $\Omega$, with energies $E_i, \tE_j \geq 0$.\footnote{While the detector on which the collision data is recorded is a cylinder, due to the fact that we will translate jets clustered with unit radius parameter to be centered at the origin, we may neglect the periodic boundary conditions in the azimuthal angle and consider the underlying domain to be a rectangle.} Given two events $\E, \tcE $ with the same total energy, $\sum_i E_i = \sum_j \tE_j$, the theory of  optimal transport provides  various notions  of distance between the two events.   In particular, for $p \geq 1$, the   $p$-Wasserstein distance is given by
\begin{align} \label{Wpdef}
W_p( \E, \tcE) &= \min_{g_{ij} \in \Gamma{(\E,\tcE)} }  \left( \sum_{ij} g_{ij}  \|x_i-\tx_j\|^p \right)^{1/p} , \\
\Gamma {(\E,\tcE)} &= \left\{ g_{ij} : g_{ij} \geq 0, \ \sum_j g_{ij} =  {E}_i, \  \sum_i g_{ij} = \tE_j   \right\} , \nonumber
\end{align}
where $\|x_i-\tilde{x}_j \|$ denotes the angular distance on the underlying space $\Omega$, which we  will often refer to as the \emph{ground metric}.
When $p =1$ or 2, $W_p$ is also known as the Earth Mover's Distance or the Monge-Kantorovich distance, respectively. Up to normalizing the energies $E_i, \tE_j$ by dividing through by the total energy of each event, we may assume without loss of generality that the total energy of all events we consider equals $1$.

One interpretation of the $p$-Wasserstein distance is that it represents the minimal amount of ``effort'' required to rearrange the  distribution of energy in $\E$ to match $\tcE$. In this case,   $g_{ij}$ represents the  amount of energy moved from particle $i$ in event $\E$ to particle $j$ in event $\tcE$, and $ \|x_i-\tilde{x}_j\|^p$ represents the ``cost'' of moving energy between the two locations. With this interpretation, $\Gamma{(\E,\tcE)} $ is the set of possible ways to rearrange $\E$ to look like $\tcE$, known as the set of \emph{transportation plans}: any rearrangement $g_{ij}$ can only move nonnegative amounts of energy, the total amount of energy moved from a fixed particle $i$ in $\E$ to all of the particles in $\tcE$ must   coincide with the original energy $E_i$, and, symmetrically, the total amount of energy moved from all of the particles in $\E$ to any fixed particle $j$ in $\tcE$ must   coincide with $E_j'$.
More generally, there are several methods to extend the Wasserstein distance to events $\E$ and $\tcE$ with different total energies, including the version of the Earth Mover's Distance considered in \cite{Komiske:2019fks}, which is a type of  partial optimal transport distance \cite{hanin1992kantorovich,piccoli2014generalized, piccoli2016properties} created by interpolating between the $1$-Wasserstein distance and the total variation norm.

Over the past twenty years, optimal transport   distances have emerged as  important metrics for image classification tasks \cite{thorpe2017transportation,pele2008linear,pele2009fast,rubner2000earth,wang2010optimal,delon2004midway}. These metrics are unique in that they lift the ground metric on the underlying space to the set of probability distributions  on that space. This is in contrast with more traditional metrics, such as the $\ell^2$ norm. For example, in an image based approach, the $\ell^2$ norm  computes the distance between two events $\E$ and $\tcE$ by, first, binning the particles on a grid with $N$ bins; second, representing the energy at each grid location by vectors $v, \tilde{v} \in \mathbb{R}^{N}$; and, third, computing the distance between $\E$ and $\tcE$ via the standard Euclidean norm,
\begin{align} d_{\ell^2(\mathbb{R}^N)}(\E, \tcE) := \left( \sum_{i=1}^{N} | v_i - \tilde{v}_i |^2 \right)^{1/2} .
\end{align}

Unlike the Wasserstein metric, the $\ell^2$ norm does not respect the geometry of the underlying space. For example, suppose each event consists of a single particle with energy 1, the particles are distance $\|x_1-\tilde{x}_1\|$ apart, and   the grid for the $\ell^2$ norm is fine enough so that the particles fall in different bins. Then,
\begin{align*}
W_p(\E, \tcE) = \|x_1-\tilde{x}\| \quad \text{ and } \quad d_{\ell^2(\mathbb{R}^n)}(\E, \tcE) = \sqrt{2} .
\end{align*}
While the $p$-Wasserstein metrics take into account the particles' locations  on the underlying space, this information is neglected by the classical $\ell^2$ norm. This ability to preserve spatial information provides the $p$-Wasserstein metrics with a natural advantage in image classification tasks.

In spite of these   theoretical benefits of optimal transport metrics, wider adoption in image classification   has been slowed by two obstacles:  computational cost and limited choice of classification algorithms.
In terms of computational efficiency, computing the $p$-Wasserstein distance between two events, with $n$ particles in each event, requires $O(n^3)$ operations via Bertsekas' auction algorithm and  $O(n^2 \log(n))$ operations via entropic regularization and the Sinkhorn algorithm \cite{altschuler2017near, bertsekas1981new, bertsekas1988dual, cuturi2013sinkhorn, peyre2019computational}. This is in contrast to the classical $\ell^2$ norm, which is naively $\mathcal{O}(n)$, when the number of bins is chosen proportional to the number of particles, $n \sim N$. In  image classification tasks, the high cost of the $p$-Wasserstein metrics   is compounded by the fact that  one needs to   compute the pairwise $p$-Wasserstein distances between the entire collection of $N_{\rm evt}$ images, requiring $O(N_{\rm evt}^2)$ computations of the distance. In the particular case of classifying jet events, the number of particles per event is relatively small, $n \approx 10^2$, and it is this latter need to compute pairwise distances between a large number of events, $N_{\rm evt} \approx 10^5$, which is the main computational expense. Furthermore, existing work using classical optimal transport metrics  must also cope with the significant computational demands of storing the matrix of pairwise distances.

The goal of the present work is to overcome the problem of high computational cost and limited choice of  algorithms by using the linearized optimal transport approximation of the 2-Wasserstein distance, originally introduced by Wang, et. al. \cite{wang2013linear} as a method for visualizing variation in sets of images. 
Let $\mathcal{R}$ denote the reference event,  a collection of particles at locations $y_i$ with energies $R_i$. For any event $\mathcal{E}$, let $r_{ij}$ denote an optimal transport plan from $\mathcal{R}$ to $\E$, that is, a minimizer of (\ref{Wpdef}). (Note that there may be more than one optimal transport plan between two given events.) In general, a transport plan $r_{ij}$ may send energy from particle $i$ in the reference measure to many different particles in event $\mathcal{E}$. Consider the average of these locations, weighted by how much energy is sent to each and normalized by the amount of energy starting at particle $i$, 
\begin{align}
z_i := \frac{1}{R_i} \sum_{j} r_{ij} x_j
\end{align}
This provides a map   from an event $\E$ to a vector $z_i$ in $n$-dimensional Euclidean space, $\mathbb{R}^n$, where $n$ is the number of particles in the reference jet.

The LOT approximation of the 2-Wasserstein metric  measures the distance between two events $\E$ and $\tcE$ by considering the Euclidean distances between all pairs $(z_i, \tilde{z}_i)$, weighted by the mass starting at particle $i$,
\begin{align} \label{LOTgdef}
LOT_{r,\tilde r}(\E,\tcE) = \left( \sum_{i} R_i \|z_i -  \tilde{z}_i \|^2 \right)^{1/2} .
\end{align}
Note that this approximation  depends on the choice of transport plans $r_{ij}, \tilde{r}_{ij}$.

In Figure \ref{LOTillustration}, we illustrate the LOT-W2 computation and its relationship to the standard 2-Wasserstein metric (OT-W2). The top row shows two optimal transport plans that rearrange a uniform reference jet of 81 constituent particles (green) into  two sample jets (blue and red), according to the exact 2-Wasserstein metric. Grey lines indicate how energy from particle $y_i$ in the reference jet is sent to particle $x_j$ in sample jet 1 or particle $\tx_j$ in sample jet 2. Note that, as there are multiple optimal ways to perform this rearrangement, the rearrangement is not guaranteed to be symmetric: in the top left figure, compare the fifth particle from the left on the bottom row (which splits mass between both blue particles) to the top row (which sends all mass to the right particle). In the bottom left subplot, we illustrate $\tilde{z}_i  - z_i$, to visualize the difference in how the reference jet is rearranged for jet 1 and jet 2. Predictably, we observe that the main difference is   energy goes further to the right in the case of jet 2. The LOT approximation of the 2-Wasserstein distance is computed by taking the sum of the   lengths of the gray vectors squared, weighted by the energy of the reference measure $R_i = 1/81$, so that $LOT_{r, \tilde{r}}(\E, \tcE) \approx 1.07$. Finally, in the lower right subplot, we illustrate the OT-W2 distance between jet 1 and jet 2, which corresponds to moving half of the energy in the jet 1 a distance $1.5$, so $W_2(\E, \tcE) = \left( 1.5^2/2 \right)^{1/2} \approx 1.06$.

\begin{figure}[h]
\includegraphics[scale=0.11]{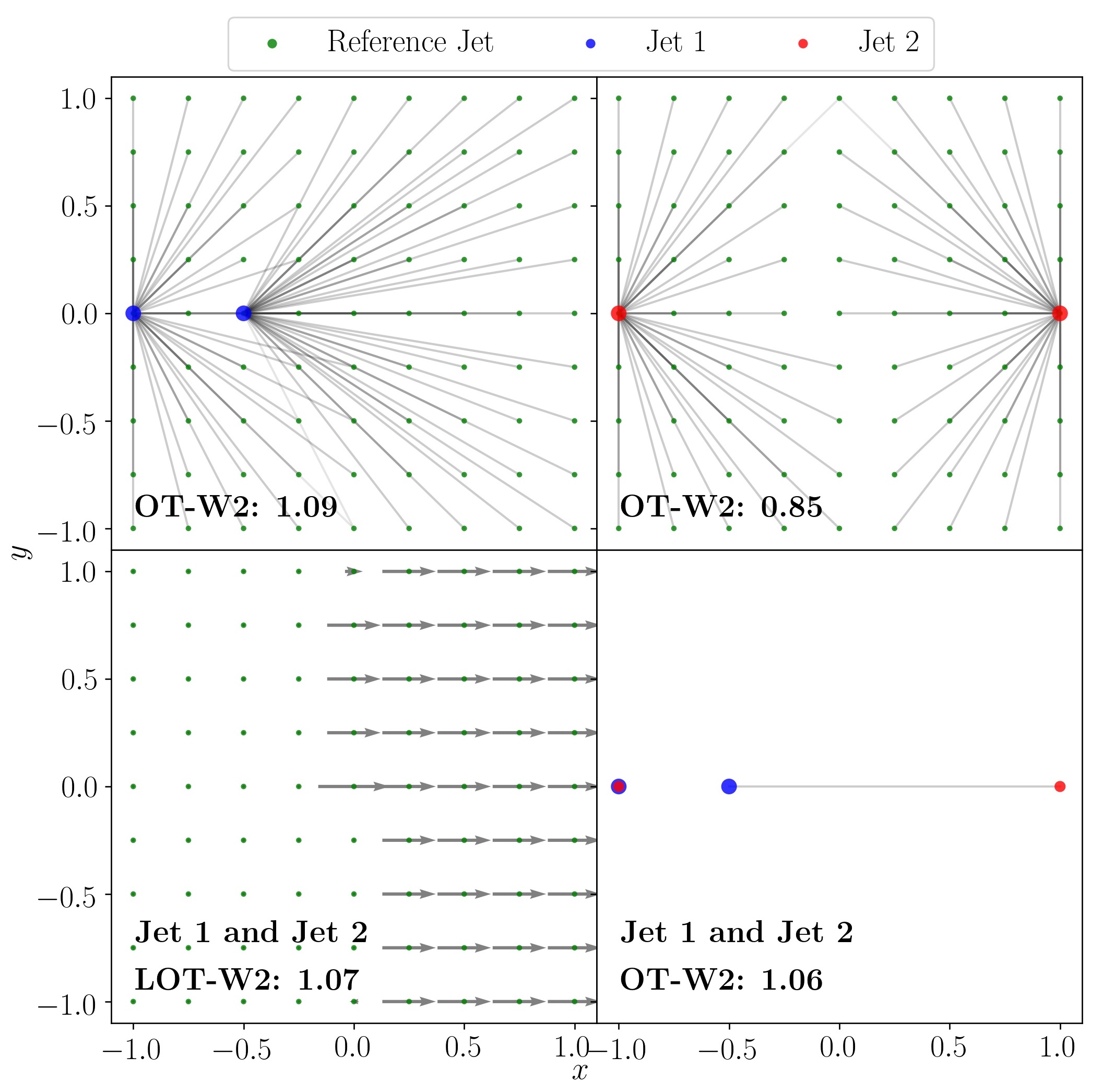}
\caption{\label{artificial_jets} 
\textit{Upper left:} An optimal movement using the OT-W2 metric to rearrange a uniform reference jet of $9 \times 9 =81$ constituent particles (green) into the sample jet 1 (blue).
\textit{Upper right:} An optimal movement using the OT-W2 metric to rearrange the same uniform reference jet (green) into another sample jet 2 (red).
\textit{Lower left:} An optimal movement to rearrange the sample jet 1 into the sample jet 2 using LOT-W2.
\textit{Lower right:} An optimal movement to rearrange the two sample jets directly using OT-W2.} \label{LOTillustration}
\end{figure}

The LOT approximation does not, in general, provide a metric on the space of events. For example, if the reference event $\R$ consists of a single particle at location $y_1$, then ${z}_1 = \sum_{j} x_j E_j$ is the ``center of energy'' of $\E$, and any two events $\E, \tcE$ with equal center of energy   satisfy $LOT_{r, \tilde r} (\E, \tcE) = 0$. Consequently, it is clear that a necessary condition for the LOT approximation to capture finer properties of   events is that the reference event cannot be too concentrated. 
In fact, this condition is also sufficient. In the appendix, we describe how the LOT approximation extends to reference events $\R$ given by general measures on Euclidean space. When the reference event does not concentrate on lower dimensional sets, the LOT approximation coincides with the \emph{transport metric with base $\R$}, denoted $W_{2,\R}$, which is a well-defined metric on the space of events, corresponding to taking the distance between two events by projecting on the 2-Wasserstein tangent plane at $\R$. 
In Corollary \ref{LOTcor} of the appendix, we prove that, if the reference event $\R^{N}$ is given by a collection of $N^2$ particles, uniformly distributed on a rectangle $\Omega$, with equally weighted energies $R_i^N = 1/N^2$, then, as $N \to +\infty$, the LOT approximation  converges to $W_{2, \R}$, where $\R$ is the probability measure uniformly distributed on $\Omega$,
\begin{align}
\lim_{N \to +\infty} LOT_{r^N,{\tilde r}^N}(\E,\tcE) = W_{2, \R} (\E, \tcE) . 
\end{align}
For this choice of $\R$ and any events $\E, \tcE$ on $\Omega$, the  transport metric is bounded above and below by the original 2-Wasserstein distance \cite{merigot2020quantitative},
\begin{align} \label{lottoot}
W_2(\E,\tcE) \leq W_{2, \R}(\E,\tcE) \leq C W_2(\E,\tcE)^{2/15} ,
\end{align}
where the constant $C >0$ depends on   $\Omega$. In this way, LOT not only converges to a well-defined transport metric $W_{2, \R}$, but that transport metric captures the behavior of the original 2-Wasserstein metric at   large and small distances.

The key benefit of the LOT approximation is that it provides a  natural embedding $\E \mapsto z_i$ of events into Euclidean space. This embedding is useful for two reasons. First, $LOT_{r,\tilde r}(\E,\tcE)$  coincides with the $\ell^2$ distance of the Euclidean coordinates $z_i, \tilde{z}_i$, weighted by the energies of the reference measure $ R_i$. Consequently,   to compute the pairwise LOT approximation   between all   events in a sample requires $O(N_{\rm evt})$ computations of the 2-Wasserstein metric, in order to construct the embedding $\E \mapsto z_i$, and then $O(N^2_{\rm evt})$ computations of the $\ell^2$ metric, in order to compute the value of LOT between all events. Given that each computation of $\ell^2$ is, on average, four orders of magnitude faster than computing a Wasserstein distance, this results in an enormous computational advantage. 

The second reason that the LOT Euclidean embedding is useful in jet classification is that it allows us to apply a wider range of classification algorithms directly to the vectors  $z_i, \tilde{z}_i$ representing the events $\E, \tcE$. While existing work using optimal transport for jet classification   considered  algorithms that only rely on pairwise distances between all events, such as kNN, by using the LOT Euclidean embedding, we are  able to apply algorithms that require a Euclidean structure,  such as LDA. By leveraging this Euclidean structure, even this simplistic algorithm is able to provide novel ways to visualize variation in the data set (see Figure \ref{fig:LDA_sigmas_combined_Wt10000}) and surprisingly accurate classification, compared to more sophisticated learning methods (see Table \ref{tab:table1}). Finally, by passing the Euclidean coordinates directly to the ML models and thereby delegating  computation of the entire pairwise LOT approximate distance to  efficient downstream methods, the LOT approximation has a large storage advantage over traditional optimal transport techniques in ML.

\section{Object Classification with LOT \label{sec:objects}}

To demonstrate the efficacy of the LOT framework, we now focus exclusively on the task of jet tagging, that is, distinguishing one type of jet from another. In addition to being an important tool in experimental analyses, jet tagging serves as an ideal playground to test new machine learning ideas in the realm of both supervised classification and unsupervised clustering. Given that optimal transport quantifies the similarity between the energy flows of two jets, the hope is that the metrics can effectively capture the differences among a variety of jet types. For the purposes of this application, we take an event to consist of a single jet and consider the flow of $p_T$ associated with particles in the jet.

Here, we consider five types of jets: single-pronged QCD (quark or gluon) jets, two-pronged boosted W boson jets, three-pronged boosted top quark jets, two-pronged boosted Higgs boson jets, and two-pronged boosted jets from a hypothetical new particle. This new Beyond-Standard-Model (BSM) particle $\phi$ is taken to be a scalar transforming in the ${\bf 6}$ representation of $SU(3)_C$ and carrying electromagnetic charge $+ \frac{1}{3}$; we consider a benchmark mass of  $m_\phi = 100$ GeV with a width of $\Gamma_\phi = 2$ GeV. It couples equally to all quark pairs that respect charge conservation. We calculate the Feynman rules for this BSM particle $\phi$ using \textsc{FeynRules} \cite{Alwall_2014}.

Instead of examining all possible pairwise combinations, we narrow our analysis to the following seven pairs: W \textit{vs} QCD, t \textit{vs} QCD, t \textit{vs} W, H \textit{vs} QCD, H \textit{vs} W, BSM \textit{vs} QCD, and BSM \textit{vs} W. For the most part, these comparisons could be thought of as treating both QCD and W boson jets as backgrounds, whereas top, Higgs boson, and BSM jets are treated as signals. The W \textit{vs} QCD pair is introduced as a benchmark for the performance of the other six tagging tasks, as well as for a meaningful comparison with the results obtained in \cite{Komiske:2019fks}.  

We generate proton-proton collision events using \textsc{madgraph} 2.6.7 \cite{Alwall_2014} at $\sqrt{s} = 14$ TeV, where the two-pronged boosted Higgs boson jets are generated via $q \bar q \to Z(\to \nu \bar \nu) + H(\to b \bar b)$, and the BSM jets through $q \bar q \to \phi \bar \phi$; all other SM jets are created via pair production. The BSM (anti)particle subsequently decays to two quarks. The matrix elements are then fed into \textsc{Pythia} 8.243 \cite{Sj_strand_2015}, with hadronization and multiple particle interactions switched on using default tuning and showering parameters. No detector simulation is included. Afterwards, we cluster the jets in \textsc{FastJet} 3.3.2 \cite{Cacciari_2012} using the anti-$k_T$ algorithm with a jet radius of 1.0, where at most two jets with $p_T \in [500, 550]$ GeV and $|y| < 1.7$ are kept.

To remove any artificial difference in the energy flows of the produced jets, every jet is preprocessed by boosting and rotating to center the jet four-momentum and vertically align the principal component of the constituent $p_T$ flow in the rapidity-azimuth plane using the \texttt{EnergyFlow} package \cite{Komiske_2018,Komiske_2019,Komiske:2019asc,Komiske:2019fks,Komiske:2019jim}. 

In order to have a unified framework for the seven comparison tasks, we work with a single choice of reference jet. The reference jet has a total $p_T$ of 525 GeV and 225 constituent particles, each with the same amount of $p_T$ evenly distributed on a $15 \times 15$ grid with   $|y| \leq 1.7$ and $|\phi| \leq \frac{\pi}{2}$. This corresponds to an isotropic distribution on the cylinder; note that related reference distributions were explored in \cite{Cesarotti:2020hwb} for the purposes of defining the event isotropy variable. We have also tried other reference jets and the resulting LOT approximation does not show any material difference compared to what is obtained from the uniform reference jet. Furthermore, as we justify rigorously in the appendix,  the LOT approximation with a uniform reference jet can be seen as an approximation of $W_{2,\R}$, the transport metric with base $\R$, which approximates the original 2-Wasserstein metric at large and small distances; see equation (\ref{lottoot}). For this reason, we will often refer to the LOT approximation as the LOT pseudo-distance in what follows.

We first normalize the $p_T$ of all jets to unity before using the Python Optimal Transport library \cite{flamary2017pot} to compute the exact OT distance between a given jet and the reference jet, with the cost being the Euclidean distance squared in the rapidity-azimuth coordinate.\footnote{This normalization step obviates the need to modify the OT distance with an additional difference term as in \cite{Komiske:2019fks}. For jet samples in the $p_T$ range explored here, we found that simple machine learning algorithms exhibit comparable or slightly better performance when using exact OT-W1 or OT-W2 distances computed between normalized jets, compared to EMD distances computed between non-normalized jets.}

 Once we have this OT distance in hand, we proceed to calculate the linear embedding for each jet using the method in Section 2. Later we recover the approximate LOT pseudo-distance between any two jets from the weighted $\ell^2$ distance between their Euclidean coordinates, which we refer to as their LOT coordinates. (Note that, due to the fact that we choose our reference jet so that all particles have equal energy, the weighted $\ell^2$ norm reduces to a classical $\ell^2$ norm in our setting.)   

Figure \ref{artificial_jets} shows the optimal energy movements between two sample QCD jets and between sample QCD and W jets using the OT-W2 distance and the LOT-W2 approximation, respectively. All jets are normalized to have unit $p_T$ before computing both metrics. In visualizing the OT-W2 metric, points in the $y$-$\phi$ plane represent constituent particles, with sizes proportional to their $p_T$; the darkness of the lines connecting points in the two jets indicate how much $p_T$ is moved from one particle to another. In visualizing the LOT pseudo-distance, vectors located at each particle in the reference jet indicate the {\it difference} between movement of $p_T$ from that particle in the reference jet to particles in the respective sample jets. In each case, the total distance between the two jets is also shown. These examples illustrate the qualitative properties of both metrics applied to simulated events: in the case of OT-W2, large OT distances correspond to the movement of significant amounts of energy between particles widely separated in the ground metric, while large LOT pseudo-distances correspond to very different transport plans between the reference jet and the respective particles. We observe that the LOT-W2 pseudo-distance is numerically close to the exact OT-W2 distance, consistent with the bounds from inequality (\ref{lottoot}).

\begin{figure}
\includegraphics[scale=0.11]{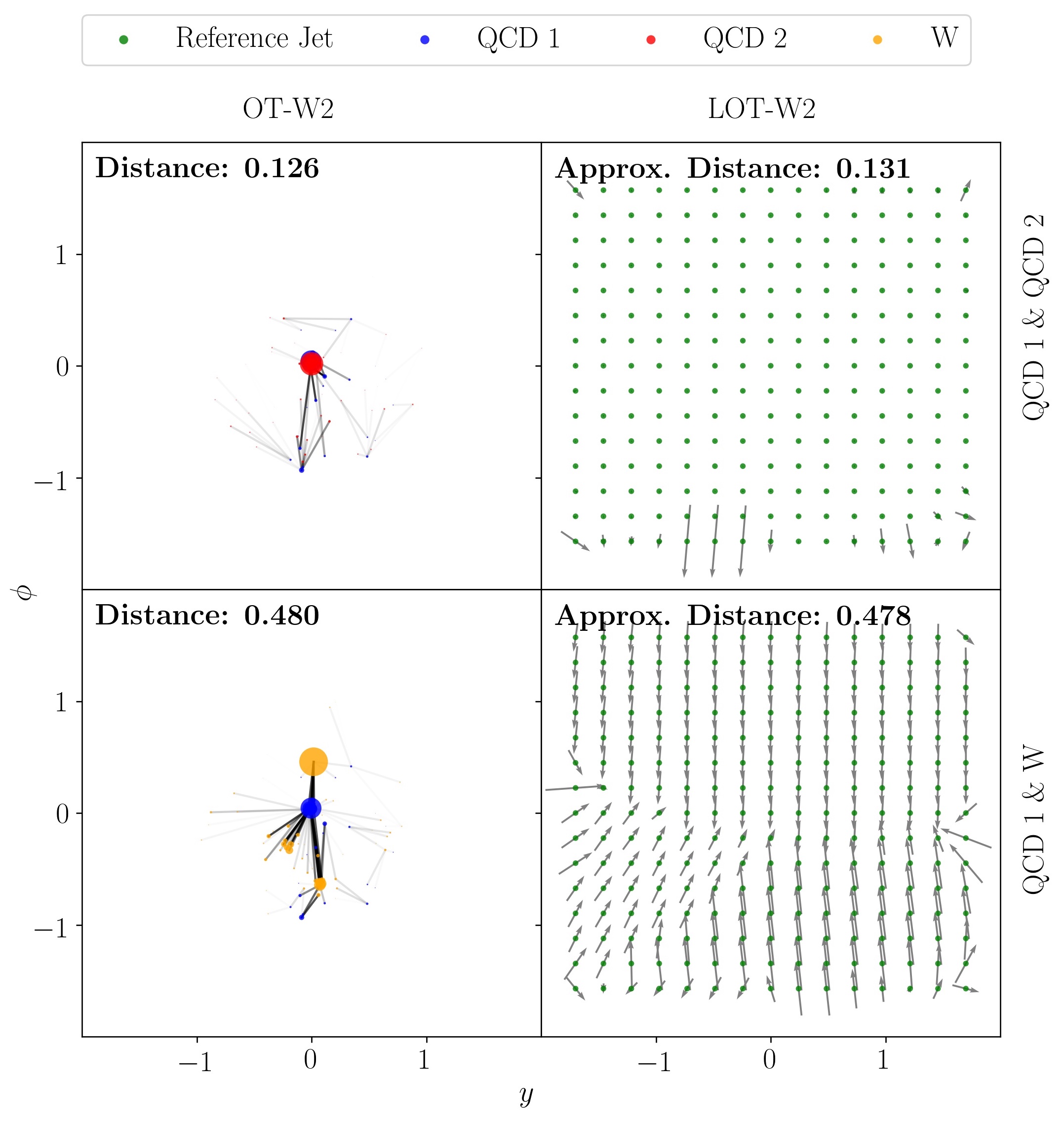}
\caption{\label{artificial_jets} 
\textit{Upper left:} The optimal movement to rearrange one QCD jet (red) into another (blue) using the exact OT-W2 metric.
\textit{Upper right:} The optimal movement to rearrange the same two QCD jets using LOT-W2.
\textit{Lower left:} The optimal movement to rearrange a W jet (orange) into a QCD jet (blue) using the exact OT-W2 metric.
\textit{Lower right:} The optimal movement to rearrange the same QCD and W jets using LOT -W2.}
\end{figure}

\section{Machine Learning with LOT \label{sec:ml}}

Once we assign  a LOT   coordinate to each jet, the inputs for jet tagging become standardized, enabling the application of a large pool of simple machine learning algorithms. Linear discriminate analysis (LDA), $k$-nearest neighbor (kNN), and support vector machine (SVM) are among many suitable algorithms for classification. Such a meaningful jet representation also makes it possible to try unsupervised clustering algorithms where we leave the model itself to assign a label for each jet. One simple example is $k$-medoids clustering. Though relatively limited in performance, all the above-mentioned traditional models have important advantages over  neural networks. They are more computationally economic, have fewer hyper-parameters to tune, and offer better human interpretability. Most of them are also off-the-shelf functions implemented in the python package \texttt{scikit-learn} \cite{scikit-learn}, making their adoption easier in practice. In our analysis, we use all four aforementioned machine learning models to either classify or cluster the jets. 

The simple supervised classifier $k$-nearest neighbor (kNN) \cite{1053964} relies on a majority vote of one's closest $k$ neighbors in the training set to determine the class membership of the new data point. Here $k$ is a model hyper-parameter to be tuned. We test $k$ in the range from 10 to 1000 with an increment of 10. Since kNN relies only on a notion of pairwise distance, it serves as a good probe to check whether our LOT approximation sufficiently captures the difference among various jet types while at the same time adequately reflecting the similarity within one specific type. The simplicity in understanding kNN and its reliance only on pairwise distances between events contribute to its adoption in the original EMD paper \cite{Komiske:2019fks}.  

A more sophisticated model, the support vector machine (SVM) \cite{Vapnik:1995}, lifts the inputs into a high-dimensional space and finds an optimal hyperplane to best separate the data. Key to SVM is the choice of a kernel function. Here we use the common rbf kernel $\exp[- \gamma d(x, x')^2]$, where $d(x, x')$ is the LOT pseudo-distance between the two data points and $\gamma$ is a tunable hyper-parameter controlling how much influence a single training example has. A high $\gamma$ suggests that only nearby points are considered. Another hyper-parameter of the model $C$ regulates the strength of the penalty term when a sample is misclassified, where a high value implies that nearly all training examples need to be classified correctly. In our analysis, we let both $C$ and $\gamma$ run from $10^{-5}$ to $10^5$ again with an increment of 10. Thus, there are $11 \times 11 = 121$ pairs of hyper-parameters and the model needs to be run for 121 times to determine the best choice.

Since both SVM and kNN involve hyper-parameter tuning, they are relatively time-consuming to train for large datasets. In contrast, linear discriminate analysis (LDA) \cite{Fisher:1936} has closed-form solutions with no hyper-parameter, making it an attractive model for a quick first look into the data. With the assumptions that the input data is Gaussian and the Gaussian for each class shares the same covariance matrix, LDA projects the input high-dimensional data onto a direction that is most discriminative, denoted as the LDA direction. Here we use LDA both as a classifier and as a tool for visualization, a point to be elaborated later.  

For unsupervised learning, we choose as a first try $k$-medoids clustering \cite{Kaufman:1987} implemented in the python package \texttt{pyclustering} \cite{Novikov2019}. The goal of the model is to partition the dataset so that the distance between points labeled to be in a cluster and the point designated as the center of that cluster is minimized. Note that the centers, called medoids, are chosen from actual data points. For the present application, the model is asked to group the unlabeled data into $k=2$ clusters. Then, the true labels are uncovered. The cluster with a higher percentage of signal jets is denoted as the signal cluster, whereas the other is designated as the background cluster. We also retrieve the true labels of the two picked medoids. Ideally, the true label of the medoid should be the same as the label of its own cluster. If not, we prefer the cluster's label. We then assign all jets in the signal cluster as signals, and those in the background cluster as background jets. This assignment is compared with the ground truth to assess the performance of our clustering model. Strictly speaking, the model is semi-supervised, for we need the true labels to decide which cluster is the signal cluster. A more detailed discussion of $k$-medoids and its performance will be given in a later paragraph. 

For every comparison task, we create two balanced datasets, each with about $50 \%$ signal jets. The smaller one, named the sample dataset, consists a total of 10,000 jets and is mainly used for picking the best hyper-parameters, though it also constitutes a complete analysis in its own right. The full dataset, on the other hand, has 140,000 jets in total, and is used to assess the model performance and draw the final conclusions. 

For the two classifiers kNN and SVM, the sample dataset is further divided into a training sample of 5000 jets, a validation sample of 2500 jets used to decide the best hyper-parameters, and a test sample of 2500 jets. The full dataset is split into a training set of 100k jets and a test set of 40k jets for these two models. For LDA, thanks to its high efficiency, we train and test on both the sample dataset (training sample size = 8000, test sample size = 2000; validation sample is not needed since there's no hyper-parameter for LDA) and the full dataset (training set size = 100k, test set size = 40k), which amounts to two separate, identical analyses. The $k$-medoids algorithm has only been applied to the sample dataset due to its computational intensity, and in this case, all 10k jets are fed into the model at once for clustering.  
 
\begin{figure*}
    \centering
    \includegraphics[scale=0.165]{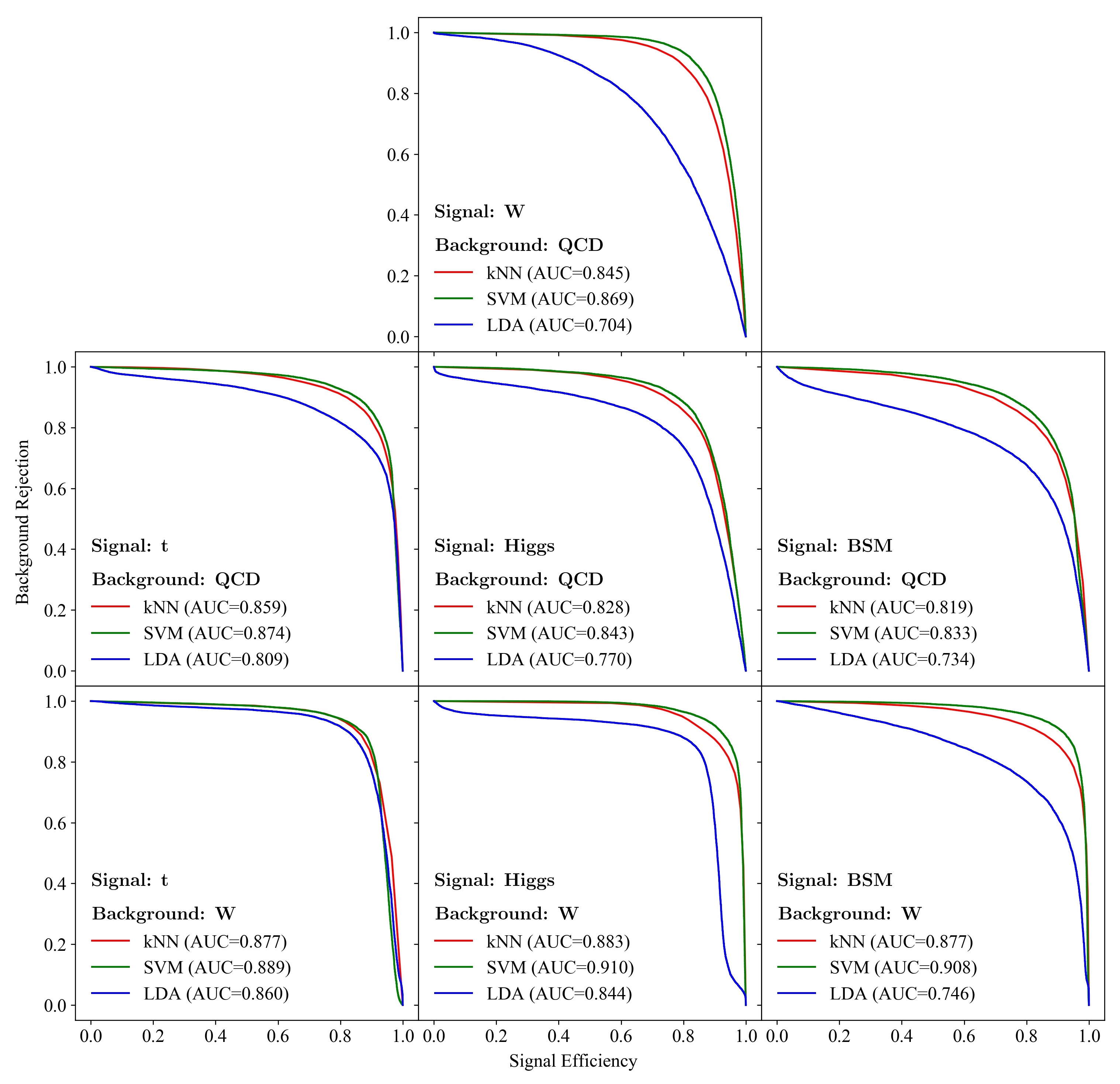}
    \caption{ROC curves for the seven jet tagging tasks evaluated on the full test datasets of 40k jets. The $x$ coordinate shows the signal efficiency rate and the $y$ coordinate gives the background rejection rate.}
    \label{fig:ROC_curves}
\end{figure*}

Fig. \ref{fig:ROC_curves} displays the receiver operating characteristic (ROC) curves of the three classifiers kNN, SVM and LDA for each of the seven comparison tasks. Also included is the Area Under the ROC Curve (AUC) which encapsulates the model performance in a single number between 0 and 1. An AUC close to 1 is most desirable, whereas a value around 0.5 suggests a random classifier, the worst-case scenario. All results are obtained on the full test datasets consisting of 40k jets, using the models trained on 100k jets with hyper-parameters, if present, picked by the sample datasets. 

To get a better sense of the model performance, we compare the AUCs of our LOT-coupled ML models for the W \textit{vs} QCD classification task with other common classifiers built in \cite{Komiske:2019fks} where the training set, though different, also contains 100k balanced W and QCD jets, and the test set contains 20k such jets.  
The model most akin to our k$_{=20}$NN-LOT is k$_{=32}$NN-EMD built upon the EMD proposed in \cite{Komiske:2019fks}, an interpolation between the OT-W1 distance and total variation norm.\footnote{Although our samples are not identical to those in \cite{Komiske:2019fks}, we apply the same prescription for simulating and preparing the samples, and our W/QCD jet samples yield results for k$_{=32}$NN-EMD compatible with \cite{Komiske:2019fks}.} The N-subjettiness ratio $\tau_2^{\beta=1}$ / $\tau_1^{\beta=1}$, introduced in \cite{Thaler_2011,Thaler_2012}, is a widely-used observable specifically designed to spot two-prong jet substructure. For the other three classifiers, namely the  Energy Flow Network (EFN) and Particle Flow Network (PFN) neural networks \cite{Komiske_2019}, and a linear classier trained on Energy Flow Polynomials (EFPs) \cite{Komiske_2018}, please refer to the original papers for more details.

\begin{center}
\begin{tabular}{ |c|c|c| } 
\hline 
 \textbf{Datasets} & \textbf{Model} & \textbf{AUC} \\
 \hline 
 \multirow{3}{*}{Our Datasets} & k$_{=20}$NN-LOT & 0.845 \\ 
 \cline{2-3}
  & SVM-LOT & 0.869 \\ 
 \cline{2-3}
  & LDA-LOT & 0.704 \\ 
 \hline
 \multirow{5}{*}{Datasets in \cite{Komiske:2019fks}} & k$_{=32}$NN-EMD & 0.887 \\ 
 \cline{2-3}
  & $\tau_2^{\beta=1}$ / $\tau_1^{\beta=1}$  & 0.776 \\
 \cline{2-3}
  & PFN & 0.919 \\
 \cline{2-3}
  & EFPs & 0.917 \\
 \cline{2-3}
  & EFN & 0.904 \\ 
  \hline 
\end{tabular}
\end{center} 

Not surprisingly, the neural networks obtain the best performance. But the four optimal transport inspired models (three with LOT and one with EMD) are on a par with these state-of-the-art complex classifiers, and they significantly outperform the N-subjettiness observable (with the single exception of the exceptionally simplistic LDA). More pertinent to our current investigation is the observation that models coupled with LOT-W2 approximation perform as well as those using the exact EMD metric. The AUCs of kNN-LOT and SVM-LOT are close to the AUC of kNN-EMD, suggesting that it does not make much difference for jet tagging whether we use the exact OT metric or its linearized version. Yet on the practical level, the LOT approximation has a significant advantage over the exact OT metric. The computation of the LOT coordinates for 140k jets only takes about 10 minutes on a desktop computer, whereas it is infeasible to compute the full exact OT matrix of pairwise distances on the same computer and still requires significant time on a cluster.

\begin{table*}[htbp]
\caption{Results for the seven jet tagging tasks using four different machine learning models coupled with the LOT coordination.}
\begin{center}
\begin{adjustbox}{scale=0.95, center}
\begin{tabular}{|c|c|c|c|c|c|c|c|c|c|}
    \hline
    \multirow{2}{*}{Model} & \multirow{2}{*}{Dataset} & & \multicolumn{7}{|c|}{Comparison Task}\\
    \cline{3-10}
    & & & W $vs$ QCD & t $vs$ QCD & t $vs$ W & H $vs$ QCD & H $vs$ W & BSM $vs$ QCD & BSM $vs$ W\\
    \hline
    \hline
    \multirow{8}{*}{LDA} & \multirow{4}{*}{Sample Dataset} & \textbf{AUC} & \textbf{0.6896} & \textbf{0.7863} & \textbf{0.8464} & \textbf{0.7642} & \textbf{0.7865} & \textbf{0.7158} & \textbf{0.7244}\\
    & & TPR & 0.6926 & 0.7746 & 0.7886 & 0.7378 & 0.7762 & 0.6713 & 0.6562\\
    & & FPR & 0.3133 & 0.2020 & 0.0958 & 0.2095 & 0.2032 & 0.2397 & 0.2074\\
    \cline{3-10}
    & & Approx. Run Time & \multicolumn{7}{|c|}{several seconds}\\
    \cline{2-10}
    & \multirow{4}{*}{Full Dataset} & \textbf{AUC} & \textbf{0.7041} & \textbf{0.8077} & \textbf{0.8573} & \textbf{0.7703} & \textbf{0.8443} & \textbf{0.7337} & \textbf{0.7455}\\
    & & TPR & 0.7156 & 0.7969 & 0.7957 & 0.7661 & 0.8254 & 0.7549 & 0.6804\\
    & & FPR & 0.3075 & 0.1815 & 0.0812 & 0.2255 & 0.1368 & 0.2874 & 0.1894\\
    \cline{3-10}
    & & Approx. Run Time & \multicolumn{7}{|c|}{several seconds}\\
    \hline
    \hline
    \multirow{10}{*}{SVM} & \multirow{4}{*}{Sample Dataset} & \textbf{AUC} & \textbf{0.8410} & \textbf{0.8630} & \textbf{0.8751} & \textbf{0.8349} & \textbf{0.8831} & \textbf{0.8239} & \textbf{0.8806}\\
    & & TPR & 0.8148 & 0.8929 & 0.8333 & 0.8006 & 0.8750 & 0.8582 & 0.9090\\
    & & FPR & 0.1327 & 0.1669 & 0.0831 & 0.1308 & 0.1088 & 0.2104 & 0.1478\\
    \cline{3-10}
    & & Approx. Run Time & \multicolumn{7}{|c|}{2 hours}\\
    \cline{2-10}
    & \multirow{4}{*}{Full Dataset} & \textbf{AUC} & \textbf{0.8687} & \textbf{0.8780} & \textbf{0.8805} & \textbf{0.8426}& \textbf{0.9100} & \textbf{0.8331} & \textbf{0.9077}\\
    & & TPR & 0.8451 & 0.8873 & 0.8365 & 0.8185 & 0.9103 & 0.8471 & 0.9191\\
    & & FPR & 0.1077 & 0.1313 & 0.0755 & 0.1332 & 0.0904 & 0.1808 & 0.1037\\
    \cline{3-10}
    & & Approx. Run Time & \multicolumn{7}{|c|}{6 hours}\\
    \cline{2-10}
    & \multirow{2}{*}{Hyperparameters} & $C$ & 1.0 & 1.0 & 10.0 & 1.0 & 1.0 & 1.0 & 1.0\\
    \cline{3-10}
    & & $\gamma$ & 100.0 & 100.0 & 10.0 & 100.0 & 100.0 & 100.0 & 100.0\\
    \hline
    \hline
    \multirow{9}{*}{kNN} & \multirow{4}{*}{Sample Dataset} & \textbf{AUC} & \textbf{0.8191} & \textbf{0.8450} & \textbf{0.8659} & \textbf{0.8203} & \textbf{0.8628} & \textbf{0.8026} & \textbf{0.8361}\\
    & & TPR & 0.7741 & 0.8164 & 0.8040 & 0.7975 & 0.8295 & 0.8172 & 0.8241\\
    & & FPR & 0.1358 & 0.1264 & 0.0723 & 0.1568 & 0.1038 & 0.2120 & 0.1520\\
    \cline{3-10}
    & & Approx. Run Time & \multicolumn{7}{|c|}{15 minutes}\\
    \cline{2-10}
    & \multirow{4}{*}{Full Dataset} & \textbf{AUC} & \textbf{0.8455} & \textbf{0.8601} & \textbf{0.8735} & \textbf{0.8280} & \textbf{0.8831} & \textbf{0.8192} & \textbf{0.8772}\\
    & & TPR & 0.8033 & 0.8217 & 0.8156 & 0.8040 & 0.8566 & 0.8261 & 0.8836\\
    & & FPR & 0.1123 & 0.1014 & 0.0686 & 0.1479 & 0.0905 & 0.1876 & 0.1292\\
    \cline{3-10}
    & & Approx. Run Time & \multicolumn{7}{|c|}{4 hours}\\
    \cline{2-10}
    & Hyperparameter & $k$ & 20 & 40 & 10 & 20 & 20 & 10 & 20\\
    \hline
    \hline
    \multirow{9}{*}{$k$-medoids} & \multirow{9}{*}{Sample Dataset} & \textbf{AUC} & \textbf{0.6797} & \textbf{0.8096} & \textbf{0.8074} & \textbf{0.7689} & \textbf{0.8028} & \textbf{0.7622} & \textbf{0.6698}\\
    & & TPR & 0.7947 & 0.9282 & 0.6583 & 0.8374 & 0.6835 & 0.8837 & 0.5216\\
    & & FPR & 0.4354 & 0.3089 & 0.0436 & 0.2996 & 0.0778 & 0.3592 & 0.1821\\
    \cline{3-10}
    & & Signal Percentage & (63.78\%, & (74.70\%, & (94.00\%, & (73.60\%, & (90.11\%, & (71.05\%, & (74.81\%,\\
    & & (sig, bkg) & 25.97\%) & 9.27\%) & 27.02\%) & 18.81\%) & 26.24\%) & 15.33\%) & 37.75\%)\\
    \cline{3-10}
    Clustering & & Clusters' Size & (6118, & (6159, & (3565, & (5682, & (3861, & (6211, & (3549,\\
    & & (sig, bkg) & 3882) & 3841) & 6435) & 4318) & 6139) & 3789) & 6451)\\
    \cline{3-10}
    & & Medoids True Labels & \multirow{2}{*}{(1, 0)} & \multirow{2}{*}{(0, 0)} & \multirow{2}{*}{(1, 0)} & \multirow{2}{*}{(1, 0)} & \multirow{2}{*}{(1, 1)} & \multirow{2}{*}{(1, 0)} & \multirow{2}{*}{(1, 0)}\\
    & & (sig: 1, bkg: 0) & & & & & & &\\
    \cline{3-10}
    & & Approx. Run Time & \multicolumn{7}{|c|}{30 minutes}\\
    \hline
\end{tabular}
\end{adjustbox}
\end{center}
\label{tab:table1}
\end{table*}

Table \ref{tab:table1} summarizes the results obtained for all seven comparison tasks, with complete, independent analyses done both on the sample datasets and the full datasets. In addition to AUC, we also report the True Positive Rate (TPR) and False Positive Rate (FPR), where the TPR is the same as the signal efficiency, and the FPR equals to one minus the background rejection. A TPR near 1 and a FPR close to 0 are preferable. For SVM and kNN, we also include the hyper-parameters chosen by the sample datasets. The results for $k$-medoids are harder to interpret, so we defer a full discussion to a later paragraph. 

Also included in the table is the approximate run time for each task, performed on an iMac with 3.6 GHz 8-Core Intel Core i9 and 16 GB memory. The longest analysis takes no more than 10 hours, which, when combined with the extra few minutes for calculating the LOT coordinates, is quite manageable. LDA in particular only takes seconds to process the full datasets and in this light its classification results are surprisingly good. In addition, models performed on the sample datasets require as few as 2 hours for a full scan of hundreds of possible combinations of hyper-parameters. Competitive classification performance coupled with efficient computational time suggests that the linearized optimal transport metric may play a role in event classification alongside the exact OT metric, complex neural networks, and traditional handpicked observables.
 
Given that the sample datasets constitute complete analyses on their own rights, we can compare their results with those obtained using the full datasets. In general, model performance naturally gets better with more training data, but we observe that the increase in performance going from 10k jets to 140k jets is perhaps not significant enough to justify the extra computational resources needed. Since the numbers quoted for AUC, TPR and FPR are only intended as general performance evaluations rather than precise measures, the fluctuations in these numbers can be safely ignored and we therefore conclude that a dataset of 10,000 jets (with as few as five thousands for training) is already enough to assess the overall quality of the model and the underlying metric.  

Some general features can be immediately read off from the table. Whichever jets we compare, SVM always gives the best classification performance with AUCs around 0.9, approaching the performance of neural networks. This suggests that jets represented in their LOT coordinates are indeed very well separated by a hyperplane in some high-dimensional feature space, which in turn demonstrates the fitness of the approximate metric itself. Except for t \textit{vs} W jets classification, the hyper-parameters chosen for SVM via the validation process are all the same, with $C = 1$ and $\gamma = 100$ where 1 happens to be the default value for C in \texttt{scikit-learn}. It means that the model uses only a reasonable amount of regularization and thus a relatively smooth decision surface is drawn. On the other hand, a $\gamma$ of 100 is considered large, indicating that only nearby samples can have an influence on the classification of a new point. 

This latter observation is consistent with what is suggested by the hyper-parameter $k$ picked by kNN. All seven comparison tasks prefer small $k$ values less than 50, which means that to determine the type of an unknown jet we need to look no further than its closest 50 neighbors. If LOT does not place same-type jets near each other as desired, then models with hyper-parameters preferring locality won't be able to achieve such satisfying classification performances. Therefore, the hyper-parameters picked by SVM and kNN provide an indirect evidence for the suitability of the optimal transport metric------it indeed groups jets of the same type near each other and separates those of different types. We will later turn this speculation into more convincing and intuitive visualization. 

Among the seven jet tagging tasks, kNN and SVM both have the best performance in distinguishing Higgs boson jets from W boson jets and are least capable of separating BSM jets from QCD jets. This is mainly caused by a relatively high false positive rate, meaning that the models have a tendency to wrongly classify QCD jets as BSM jets. The same reason applies to LDA when it performs poorly on W \textit{vs} QCD classification relative to other tasks. For each type of signal jets (t, H, or BSM), all three classification models perform better when the background is W jet rather than QCD jet.

We now focus on the $k$-medoids clustering algorithm, which is only analyzed on the sample datasets due to computational limitations. Given that unsupervised learning is inherently more difficult than supervised learning, it's not surprising to see the performance of $k$-medoids algorithm to be inferior to that of kNN or SVM. But even then, except for the W \textit{vs} QCD and BSM \textit{vs} W tasks, the AUCs of $k$-medoids are all above 0.75, on a par with the supervised learning models analyzed on the sample datasets. The clustering algorithm even shows superior performance compared to LDA for most tagging tasks. This remarkable achievement again points to the merit of the underlying approximate LOT distance and is encouraging for the further exploration of optimal transport applications to unsupervised learning algorithms.

It should be noted that AUC is not the only gauge of model performance. Especially in the case of $k$-medoids clustering, we also need to take a look at other indicators to map a more complete picture. Beside examining the TPR and FPR, we also like to know more about the properties of the two clusters outputted by the algorithm. If the model is perfect, then each cluster should contain only signal jets or only background jets. The purity of the two clusters is given in the second row of $k$-medoids clustering in the table, where we record the signal percentage (defined as the number of signals in the cluster divided by the total number of jets in that cluster) in the signal cluster and the background cluster, respectively. By definition, the signal cluster is the group with a majority of signal jets, which, if pure, should have a signal percentage of $100\%$. Similarly, a pure background cluster should have $0\%$ signal percentage. Notice that the sum of the signal percentage of the two clusters does not necessarily equal to 1 (but in the ideal case it is). The worst-case scenario is to have the signal percentage of both clusters close to $50\%$. A quick look at the second row at least qualitatively confirms that the AUC of the task is indeed higher whenever we have two purer clusters, with the best AUC obtained for t \textit{vs} QCD clustering which has a signal percentage of $74.70\%$ for the signal cluster and only $9.27\%$ for the background cluster. 

The size of the clusters also reveals how well the model performs. Ideally, the result would be two clusters with equal size, that is, each with 5000 jets, since the data itself is balanced. Here the best result we have is for H \textit{vs} QCD task, where the Higgs cluster has 5682 jets and the QCD cluster has a total of 4318 jets. But in general, the two clusters are not well balanced. In the worst case, the W cluster has $81.77\%$ more jets than the BSM cluster, and it does correspond to the lowest AUC score. 

In theory, the two medoids should be the most representative jet for the clusters they respectively belong. Since the medoids are actual data points, we can uncover their true labels and check whether they agree with the type of the cluster they're assigned to. Only the two tasks, t \textit{vs} QCD and H \textit{vs} W, give conflicting answers. For the t \textit{vs} QCD clustering, the two chosen medoids are both background QCD jets. Thus the signal top cluster acquires a QCD jet as its representative. The situation is reversed for the H \textit{vs} W task where now the background W cluster elects a signal Higgs jet as its exemplar. Nevertheless, both tasks enjoy high AUC scores, which suggests that the true labels of the medoids might not have a direct influence on model performance.  

The general message here is that AUC, though powerful and straighforward, is not enough to assess the performance of an algorithm; other indicators are required to gain a fuller appreciation of the strength and weakness of the model, both for clustering and for classification.    

Lastly, we use LDA to visualize jets and aid understanding of the LOT approximation and its associated Euclidean embedding. Our approach follows work by Wang, et. al. \cite{wang2013linear}, which introduced the LOT framework and applied it to visualization tasks, such as discriminating nuclear chromatin patterns in cancer cells. Given the $225 \times 2$ linearized coordinate for each jet, we first stack the list of the second coordinate $\phi$ at the end of the list of the first coordinate $y$ and reshape the coordinate to be $450 \times 1$, which is then fed into a LDA model for the projection of the 450 coordinates onto one single most discriminative direction (denoted as the LDA direction). This allows us to represent every jet as one single point on the LDA direction for easy visualization. Fig. \ref{fig:LDA_sigmas_combined_Wt10000} shows such projection for the 10000 jets in the t \textit{vs} W sample dataset, which enjoys the highest AUC among the seven tasks with the LDA classifier. A clear separation between W and top jets can be seen, with the majority of W boson jets grouped towards the left end of the LDA direction and most top jets towards the right end, explaining the good performance of the LDA classifier for this task. 

It is enlightening to see how jets vary along the chosen LDA direction. To this end, we first select the jet whose 1-dimensional projected LDA coordinate has a value closest to the mean of all LDA coordinates in the dataset and denote it as the mean jet. We then compute the standard deviation of the dataset. Now jets whose LDA coordinates are up to 3 sigmas away from the mean jet are displayed in Fig. \ref{fig:LDA_sigmas_combined_Wt10000}. We observe a clear tendency of particles spreading more on the $y$-$\phi$ plane as we move from the left end of the LDA direction to the right end, i.e., from negative sigmas to positive sigmas, corresponding well to our intuition that top jets are more smeared and tend to have a three-pronged structure.   

\begin{figure*}
    \centering
    \includegraphics[scale=0.075]{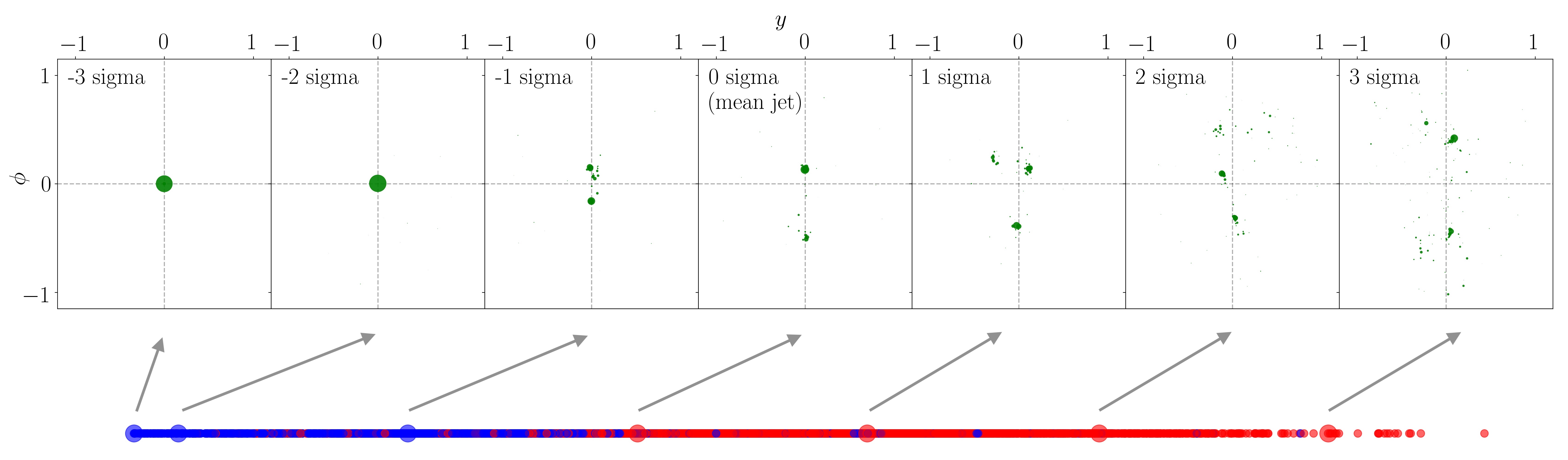}
    \caption{\textit{Bottom:} Projection of the LOT coordinates of 10,000 jets in the sample dataset onto the LDA direction chosen by the model. Blue dots represent W boson jets and red dots refer to top jets. The seven larger dots represent jets whose LDA coordinates are $-3,-2,-1,0,1,2,3$ sigma away from the mean jet (starting from the left). 
    \textit{Top:} The energy flow in the rapidity-azimuthal plane of the seven jets chosen in the bottom plot respectively. The intersection of the dashed lines shows the location of the origin in the $y$-$\phi$ plane.}
    \label{fig:LDA_sigmas_combined_Wt10000}
\end{figure*}

As another illustration, we examine more closely how the OT-W2 metric rearranges the $p_T$ of one jet to make it look like another, as shown in Fig. \ref{fig:LDA_1to1_Wt10000}. Here we first select the rightmost top jet $t^1$ and the leftmost W boson jet $W^1$ in the bottom plot of Fig. \ref{fig:LDA_sigmas_combined_Wt10000}. We then compute the exact 2-Wasserstein optimal transportation matrix $\gamma_{ij}$, which instructs how much of $p_T$ is moved from particle $i$ in jet $W^1$ (denoted as $W^1_i$) to particle $j$ in jet $t^1$ (denoted as $t^1_j$). To interpolate between the two extreme jets, we create a new jet that depends on an interpolation parameter $\alpha \in [0, 1]$, where $\alpha = 0$ outputs a jet identical to $W^1$ and $\alpha = 1$ recovers the $t^1$ jet. This new artificial jet$^\alpha$ contains $i \times j$ particles, each with 
\begin{align} \nonumber
   p_T^{\alpha} &= \gamma_{ij},  \\
   y^{\alpha} &= (1 - \alpha) \times y(W^1_i) + \alpha \times y(t^1_j), \\ \nonumber
   \phi^{\alpha} &= (1 - \alpha) \times \phi(W^1_i) + \alpha \times \phi(t^1_j),
\end{align} 
where $y(W^1_i)$ is the $y$ coordinate of the $i$th particle in jet $W^1$, and likewise for the others. From the perspective of optimal transport theory, this artificial jet is precisely the 2-Wasserstein geodesic between the jets. Several values of $\alpha$ are picked in Fig. \ref{fig:LDA_1to1_Wt10000} so as to show a few representatives of the interpolated jets and help us to understand intuitively the $p_T$ movement by the OT-W2 metric. This interpolation technique may prove relevant to the fast simulation of collider events, insofar as it allows interpolation between real events.

\begin{figure*}
    \centering
    \includegraphics[scale=0.085]{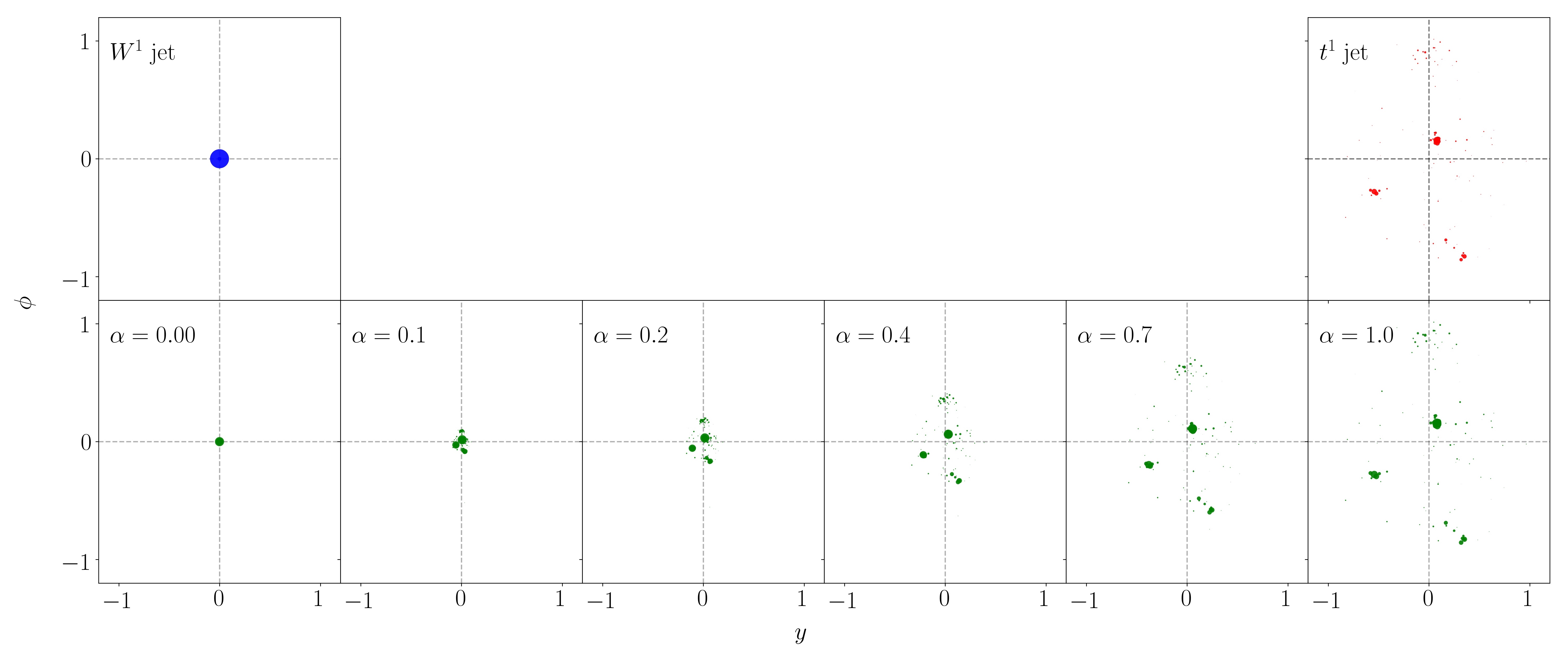}
    \caption{The OT-W2 movement of $p_T$ to rearrange the leftmost W boson jet $W^1$ (blue) into the rightmost top jet $t^1$ (red) in the sample dataset. The intermediate green plots show artificial jets created via the interpolation parameter $\alpha$. When $\alpha = 0$ and $1$, the jets are respectively identical to $W^1$ and $t^1$ up to visualization. Again the intersection of the dashed lines shows the location of the origin. 
    \label{fig:LDA_1to1_Wt10000}}
\end{figure*}

The above visualizations provide useful insight into the performance of the LOT approximation and the machine learning model coupled to it, offering a useful intermediary between analytic kinematic variables and deep neural networks.

\section{Conclusion \label{sec:conc}}
The theory of optimal transport offers a new perspective on the traditional problems of collider physics, beginning with the introduction of the OT-based Energy Mover's Distance in \cite{Komiske:2019fks}. But the practical value of exact OT metrics as competitors to specialized variables and deep neural networks is limited by the need to determine $\mathcal{O}(N_{\rm evt}^2)$ computationally expensive OT distances between $N_{\rm evt}$ events. In this paper we have introduced an efficient approximation scheme for computing optimal transport distances in collider events using a linear optimal transport approximation to the 2-Wasserstein distance. This entails computing the exact OT distance between each event and a reference jet containing $n$ particles; the corresponding transport plan provides a map from the event to a vector in $n$-dimensional Euclidean space. The approximate LOT distance between two events is then obtained by computing a simple weighted $\ell^2$ distance between the corresponding $n$-vectors, so that only $\mathcal{O}(N_{\rm evt})$ OT distances and  $\mathcal{O}(N_{\rm evt}^2)$ $\ell^2$ distances are required. This makes the calculation of approximate OT distances between collider events in a typical sample accessible to a desktop computer. Furthermore, we have proved that this LOT approximation converges to a true metric on the space of collider events in the continuum limit.

The Euclidean embedding furnished by our approximation scheme makes it a natural input to simple machine learning algorithms that require more than the pairwise distance between events, such as LDA. We have demonstrated the value of the LOT framework for jet tagging in a number of classification tasks, illustrating both the relative computational efficiency (compared to exact OT approaches) and interpretability (compared to deep neural networks) of our approach. The two classifiers kNN and SVM coupled with the LOT approximation achieve high performance on a level comparable to both the exact OT approach and complex neural networks, while significantly outperforming the traditional N-subjettiness variable. The choice of the hyper-parameters of the two models further confirms the effectiveness of the approximate LOT distance in capturing the difference among various jet types. As a quick first look into the datasets, LDA performs surprisingly well and provides an intuitively clear visualization method.  The good performance of the $k$-medoids clustering algorithm is encouraging for further explorations of the application of the LOT framework to tasks beyond supervised learning, including clustering and anomaly/novelty detection. Finally, the similarity in the performance of the sample datasets and the full datasets suggests that only as few as 10k jets are required to have an estimate on the quality of the model and the underlying metric, further reducing the computational cost.       

There are a wide variety of future directions. The computational speedup offered by the LOT approximation should make it possible to apply optimal transport methods more broadly in analyzing both simulated and actual collider data. Likewise, this speedup motivates extending LOT methods to other optimal transport metrics (such as unbalanced OT) which may be relevant to collider physics but whose application is currently limited by computational cost. To the extent that it involves the transport plan from a reference jet to an event, the approximate LOT distance shares aspects with the OT-based event isotropy variable \cite{Cesarotti:2020hwb}, and it would be interesting to investigate their relationship further. The convergence of the LOT approximation to a true metric in the continuum limit suggests it may play a role as a discrete approximation scheme in the broader geometric approach to collider observables proposed in \cite{Komiske:2020qhg}.   

More broadly, there remains much to explore at the interface between collider physics and the theory of optimal transport.  

\begin{acknowledgments}
J. Cheng would like to thank Timothy Trott for providing example FeynRules files. K. Craig would like to thank Bernhard Schmitzer and Dejan Slep\v cev for   helpful conversations about LOT. We would like to thank Jesse Thaler for comments on the manuscript. The work of T. Cai and N. Craig was supported in part by the Department of Energy under the grant DE-SC0250757. The work of J. Cheng was supported by The Create Fund, thanks to the generosity of CCS donors. The work of K. Craig was supported by National Science Foundation grant DMS-1811012 and a Hellman Faculty Fellowship.
\end{acknowledgments}

\appendix*
\section{From LOT approximation to LOT distance \label{app:proof}}

In this appendix, we prove the convergence of the LOT approximation, defined in equation (\ref{LOTgdef}), to a true metric in the continuum limit. For the sake of brevity, we will only briefly discuss the optimal transport theory underlying this result, primarily with the goal of establishing  notation. We refer the reader to the textbooks by Ambrogio, Gigli, and Savar\'e \cite{AGS}, Peyr\'e and Cuturi \cite{peyre2019computational}, Santambrogio \cite{santambrogio2015optimal}, and Villani \cite{villani2003topics} for further background.
 
 Let $\P(\Rd)$ denote the set of probability measures on $\Rd$. Given $\mu, \nu \in \P(\Rd)$, a measurable function $\bt: \Rd \to \Rd$ \emph{transports $\mu$ onto $\nu$} if $\nu(B) = \mu(\bt^{-1}(B))$ for all measureable sets $B \subseteq \Rd$. We call $\nu$ the \emph{push-forward of $\mu$ under $\bt$} and write $\nu = \bt \# \mu$. For historical reasons, it is conventional in the field of optimal transport to think of the amount of measure  $\mu$ gives to a measurable set $B$ as the \emph{mass of $B$ with respect to $\mu$} and to interpret a measurable function $\bt$ as a \emph{transport map} that \emph{rearranges the mass in $\mu$ to look like $\nu$}. Conveniently for physicists, the ``mass'' and ``energy'' notation is equivalent in natural units, and we will use the former here. Given a probability measure on a product space, for example $\bgamma \in \P(\Rd \times \Rd)$, its \emph{marginals} are given by the pushforward of the measure through the projections on each component of the product. For example, if  $\pi^2: \Rd \times \Rd \to \Rd$ is the projection onto the second component of $\Rd \times \Rd$, then $\pi^2 \# \bgamma$ is the \emph{second marginal} of $\gamma$. Finally, we  say that $\E \in \mathcal{P}(\Rd)$ has \emph{finite second moment} if $M_2(\E):=\int_\Rd |x|^2 d \E(x) < +\infty$, in which case we write $\E \in \P_2(\Rd)$. 

For any $\E, \tcE \in P_2(\Rd)$, the 2-Wasserstein distance from $\E$ to $\tcE$ is given by
\begin{align*}
 W_2(\E,\tcE) &= \min_{ \bgamma \in \Gamma(\E, \tcE)}  \left( \int_{\Rd \times \Rd} |x-y|^{2}d \bgamma(x,y)     \right)^{1/2} , \\
 \Gamma(\E, \tcE) &= \{ \bgamma \in \P(\Rd \times \Rd) : \pi^1 \# \bgamma = \E \ , \pi^2 \# \bgamma = \tcE \}  
 \end{align*}
 Note that, in the special case $\E = \sum_i \delta_{x_i} E_i$, $\tcE = \sum_j \delta_{\tx_j} \tE_j$, the above definition of the 2-Wasserstein distance coincides with that given   in section \ref{sec:lot}.
We refer to the set of transport plans $\gamma \in \Gamma(\E,\tcE)$ that achieve the minimum as the set of \emph{optimal transport plans}, which we denote by $\Gamma_0(\mu, \nu)$. Furthermore, we say that a plan $\bgamma \in \Gamma(\E,\tcE)$ is \emph{induced by a transport map} if there exists a measurable function $\bt: \Rd \to \Rd$ so that $\bgamma = (\id \times \bt) \# \E$, where $\id(x) = x$ is the identity mapping.
 
Just as we may extend the 2-Wasserstein distance from the discrete case to the case of probability measures, we may likewise extend the definition of the LOT functional, as well as define the related concept of transport metrics.
We devote particular attention to the case that the reference measure $\R$ does not give mass to sets of $(d-1)$-dimensional Hausdorff measure; in other words, the measure does not concentrate on small sets. In this case, for any $\E \in \P_2(\Rd)$, there exists a unique optimal transport plan $\brho \in \Gamma_0(\R, \E)$, and $\brho$ is induced by a transport map \cite{Gigli}. This transport map is unique $\E$-almost everywhere, and we refer to it as the \emph{optimal transport map from $\R$ to $\E$,} denoted  $\bt_\R^\E$ \cite{McCannExistence}. The function $x \mapsto \bt_\R^\E(x)$ represents where mass starting at location $x$ in the reference measure $\R$ is sent in the target measure $\E$, in order to rearrange the mass from $\R$ into $\E$, using the least amount of effort. Note that a necessary condition for such an optimal transport map to exist is that an optimal rearrangement of $\R$ to $\E$ does not \emph{split mass}; that is, all mass starting at a specific location in $\R$ must be sent to the same location in $\E$.  

Given a reference measure $\R \in \P_2(\Rd)$, which does not give mass to sets of $(d-1)$-dimensional Hausdorff measure, and measures $\E, \tcE \in \P_2(\Rd)$, the \emph{transport metric with base $\R$} is given by 
\begin{align} \label{transportmetricdef}
W_{2, \R}(\E, \tcE) = \left( \int | \bt_{\R}^{\E} - \bt_{\R}^{\tcE} |^2 d \R \right)^{1/2} . 
\end{align}
The transport metric with base $\R$ is a well-defined metric on $\P_2(\Rd)$, which can be interpreted as computing the distance between $\E$ and $\tcE$ by projecting onto the tangent plane at $\R$  \cite[Proposition 1.15]{craig2016exponential}, \cite[equation 6]{wang2013linear}, \cite[equations (7.3.2), (9.2.5), Theorem 8.5.1]{AGS}.  

 In this section, we prove that the linearized optimal transport approximation converges as the discretization of the reference measure is refined. In order to do this, we now define the LOT functional for general measures and show its relationship with the transport metric with base $\R$. Given  measures $\R, \E, \tcE \in \P_2(\Rd)$, for any $\brho \in \Gamma_0(\R, \E)$, $\tbrho \in \Gamma_0(\R, \tcE)$, there exists $\bomega \in \P(\Rd \times \Rd \times \Rd)$ so that
\begin{align}
\pi^{1,2} \# \bomega = \brho  \quad \text{ and } \quad \pi^{1,3} \# \bomega = \tbrho ,
\end{align}
where $\pi^{i,j}$ is the projection on the $i$th and $j$th components of $\Rd \times \Rd \times \Rd$; when $\R$ doesn't give mass to small sets, then $\bomega$ is unique \cite[Lemma 5.3.2]{AGS}. By disintegration of measures, there exists a family $\{ \bomega_{x_1} \in \P(\Rd \times \Rd) \}_{x_1 \in \Rd}$ so that for any  measurable function $f: \Rd \times \Rd \times \Rd \to [0,+\infty)$,
\begin{align} \nonumber
\int_\Rd \left( \int_{\Rd\times \Rd} f(x_1, x_2,x_3)d\bomega_{x_1}(x_2,x_3) \right) d \R(x_1) \\
\quad =  \int_{\Rd\times \Rd \times \Rd} f(x_1, x_2,x_3)d\bomega(x_1,x_2,x_3).
\end{align}

In this way,  for  $\R, \E, \tcE \in \P_2(\Rd)$ and $\brho \in \Gamma_0(\R, \E)$, $\tbrho \in \Gamma_0(\R, \tcE)$, the LOT functional is defined by 
 \begin{align} \label{genLOT}
&LOT_{\brho, \tbrho}(\E,\tcE) \\
&\quad = \left( \int \left| \int (x_2 - x_3) d \bomega_{x_1}(x_2,x_3) \right|^2 d \R(x_1) \right)^{1/2}. \nonumber
\end{align}
In the special case that $\R = \sum_i \delta_{y_i}R_i$, $\E = \sum_j \delta_{x_j} E_j$, and $\tcE = \sum_k \delta_{\tx_k} \tE_k$, this reduces to the LOT functional defined in section \ref{sec:lot}. Furthermore, in the special case that $\R$ does not give mass to sets of $(d-1)$-dimensional Hausdorff measure, the optimal transport plans $\brho = (\id \times \bt_\R^\E) \# \R$ and $\tbrho = (\id \times \bt_\R^\tcE) \# \R$ are unique, as is the measure $\bomega = (\id \times \bt_\R^\E \times \bt_\R^\tcE) \# \R$ and its disintegration $\bomega_{x_1} = \delta_{(\bt_\R^\E(x_2), \bt_\R^\tcE(x_3))}$. Consequently, when $\R$ does not give mass to small sets,  the LOT functional is independent of the choice of transport plans $\brho, \tbrho$, and $LOT_{\brho, \tbrho}(\E, \tcE) = W_{2, \R}(\E, \tcE)$; that is, the LOT approximation becomes a well-defined metric on the space of probability measures with finite second moment. Similarly, when $\R$ does not give mass to small sets, the LOT Euclidean  embedding can be thought of, from a geometric perspective, as the inverse of the exponential map
\begin{align}
\E \mapsto \int x_2  d \bomega_{x_1}(x_2,x_3) = \bt_\R^\E ,
\end{align} 
which is an isometric embedding from $W_{2, \R}$ to $L^2(\R)$.

We now prove that, for any sequence   $\R^N \xrightarrow{W_2} \R$, where $\R$ does not give mass to small sets,   the LOT approximation corresponding to $\R^N$ converges to the transport metric with base $\R$. Furthermore, we allow the events $\E^N$ and $\tcE^N$ to likewise vary along  convergent sequences.
 
\begin{prop} \label{convergenceLOTprop}
Consider three sequences of probability measures $\R^N, \E^N, \tcE^N  \in \P_2(\Rd)$ that converge to $\R$, $\E$, and $\tcE$ in the 2-Wasserstein metric.  If $\R$ does not give mass to small sets, then for any choices of optimal transport plans $\brho^N \in \Gamma_0(\R^N, \E^N)$ and $\tbrho^N \in \Gamma_0(\R^N,  \tcE ^N)$, we have
\begin{align}
\lim_{N \to +\infty} LOT_{\brho^N,\tbrho^N}(\E^N,\tilde{\E}^N) = W_{2, \R}(\E, \tilde{\E}).
\end{align}
 \end{prop}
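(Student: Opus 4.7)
The plan is to recast the $LOT$ functional as the squared $L^2$-norm of a difference of barycentric projections and then use the uniqueness of optimal plans in the regular limit to pass $N\to\infty$. Disintegrating $\bomega^N$ against its first marginal $\R^N$ and using the marginal constraints $\pi^{1,2}\#\bomega^N = \brho^N$, $\pi^{1,3}\#\bomega^N = \tbrho^N$, I would first write
\begin{align*}
LOT_{\brho^N,\tbrho^N}(\E^N,\tcE^N)^2 = \int \bigl| b^{\E^N}(x_1) - b^{\tcE^N}(x_1) \bigr|^2 d\R^N(x_1),
\end{align*}
where $b^{\E^N}(x_1) := \int x_2\, d\brho^N_{x_1}(x_2)$ is the barycentric projection of $\brho^N$, and analogously for $b^{\tcE^N}$. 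In the regular limit, the conditionals $\bomega_{x_1}$ are Dirac, the barycentric projections reduce to the transport maps $\bt_\R^\E,\bt_\R^\tcE$, and the target $W_{2,\R}(\E,\tcE)^2 = \int |\bt_\R^\E - \bt_\R^\tcE|^2 d\R$ is precisely the limiting version of this identity.

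Next, I would invoke stability of optimal transport to pin down the limits of the plans. The $W_2$-convergences of all marginals bound the second moments of $\{\brho^N\}$, $\{\tbrho^N\}$, and $\{\bomega^N\}$ uniformly, which combined with the tightness of the marginals gives narrow precompactness. Any cluster point of $\{\brho^N\}$ lies in $\Gamma(\R,\E)$ by continuity of the projections and is optimal by lower semicontinuity of the quadratic cost; since $\R$ charges no small set, $\Gamma_0(\R,\E) = \{(\id\times\bt_\R^\E)\#\R\}$ is a singleton, forcing $\brho^N \wto \brho$, and likewise $\tbrho^N \wto \tbrho$. The uniqueness of the triple gluing in the regular limit then yields $\bomega^N \wto \bomega = (\id\times\bt_\R^\E\times\bt_\R^\tcE)\#\R$.

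Finally, I would pass to the limit via a conditional Pythagoras decomposition. Setting $\Phi(x_1,x_2,x_3) := x_2 - x_3$,
\begin{align*}
\int |\Phi|^2 d\bomega^N = LOT_{\brho^N,\tbrho^N}(\E^N,\tcE^N)^2 + \int \mathrm{Var}_{\bomega^N_{x_1}}(\Phi)\, d\R^N(x_1).
\end{align*}
The $W_2$-convergence of marginals provides uniform integrability of $|\Phi|^2$ under $\{\bomega^N\}$, so Vitali's theorem together with $\bomega^N\wto\bomega$ gives $\int|\Phi|^2 d\bomega^N \to \int|\Phi|^2 d\bomega = W_{2,\R}(\E,\tcE)^2$, the last equality because $\bomega$ is supported on the graph of $(\bt_\R^\E,\bt_\R^\tcE)$. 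Jensen then delivers the upper bound $\limsup LOT_{\brho^N,\tbrho^N}(\E^N,\tcE^N)^2 \leq W_{2,\R}(\E,\tcE)^2$ at once; the matching lower bound reduces to showing that the conditional-variance term vanishes in the limit, equivalently that $(\id\times b^{\E^N})\#\R^N \to (\id\times\bt_\R^\E)\#\R$ in $W_2$, and similarly for $\tcE$. This is the main obstacle: weak convergence of the gluings does not in general control their disintegrations, so one must extract additional structure from the regularity of $\R$---namely, that the limit plan is concentrated on a graph---to rule out asymptotic mass-splitting in the barycentric projections, invoking a standard stability-of-maps result for optimal transport.
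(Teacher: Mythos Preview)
Your setup---compactness of $\bomega^N$, uniqueness of the limit $\bomega=(\id\times\bt_\R^\E\times\bt_\R^\tcE)\#\R$ via regularity of $\R$, uniform integrability of $|\Phi|^2$ from the $W_2$-convergence of the marginals, and Jensen for the upper bound---matches the paper's proof exactly. The divergence is entirely in the lower bound, which you correctly flag as ``the main obstacle'' and leave to an unspecified stability-of-maps result.

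The paper closes this gap more directly, and with tools you already have in hand. Rather than show the conditional-variance term vanishes (which amounts to \emph{strong} $L^2$-convergence of the barycentric projections), it suffices to show that $v^N(x_1):=\int(x_2-x_3)\,d\bomega^N_{x_1}$ converges \emph{weakly} in $L^2(\R^N)\to L^2(\R)$ to $v=\bt_\R^\E-\bt_\R^\tcE$, and then invoke lower semicontinuity of the convex integrand $|\cdot|^2$ under weak convergence with varying base measures \cite[Theorem 5.4.4(ii)]{AGS}. The weak convergence is a one-line computation using what you already proved: for any $f\in C^\infty_c(\Rd)$,
\[
\int f\,v^N\,d\R^N=\int f(x_1)(x_2-x_3)\,d\bomega^N\longrightarrow\int f(x_1)(x_2-x_3)\,d\bomega=\int f\,v\,d\R,
\]
since $f(x_1)(x_2-x_3)$ has at most linear growth and $\bomega^N\to\bomega$ in $W_2$. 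So your route through strong convergence of $b^{\E^N}$ and $b^{\tcE^N}$ separately is more than is needed; weak convergence of their difference suffices and follows immediately. (Incidentally, your ``equivalently'' is not quite an equivalence: the separate $W_2$-convergences of $(\id\times b^{\E^N})\#\R^N$ and $(\id\times b^{\tcE^N})\#\R^N$ imply the conditional variance of $\Phi=x_2-x_3$ vanishes, but the converse can fail, since $x_2-x_3$ can concentrate while $x_2$ and $x_3$ individually do not.)
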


 \begin{proof}
Throughout, we   use the equivalence between  convergence in the Wasserstein metric and  narrow convergence combined with convergence of second moments \cite[Remark 7.1.11]{AGS}. In particular, this fact ensures that $\R^N$, $\E^N$, and $\tcE^N$ converge narrowly, so $\bomega^N$ is narrowly relatively compact \cite[Lemma 5.2.2]{AGS}. Any narrow limit point $\bomega$ of this sequence  satisfies, in the sense of narrow convergence,
\begin{align} \label{bomegaunique1}
\pi^{1,2} \#\bomega = \lim_{N \to +\infty}  \pi^{1,2} \# \bomega^{N} =\lim_{N \to +\infty} \brho^N =\brho   ,   \\
\pi^{1,3} \#\bomega=  \lim_{N \to +\infty}  \pi^{1,3} \# \bomega^{N} =\lim_{N \to +\infty} \tbrho^N = \tbrho     , \label{bomegaunique2}
\end{align}
where $\brho \in \Gamma_0(\R, \E)$, $\tbrho \in \Gamma_0(\R, \tcE)$ \cite[Proposition 7.1.3]{AGS}. Since $\R$ doesn't give mass to sets of $(d-1)$-dimensional Hausdorff measure, the limit point $\bomega$ is unique and $\bomega = (\id \times \bt_\R^\E \times \bt_\R^\tcE) \# \R$ \cite[Lemma 5.3.2]{AGS}.
Furthermore, since
\begin{align}\nonumber
&\lim_{N\to +\infty} M_2(\bomega^N) \\ \nonumber
&\quad = \lim_{N \to +\infty} \int |x_1|^2 + |x_2|^2 +|x_3|^2 d \bomega^N(x_1,x_2,x_3) \\ \nonumber
&\quad = \lim_{N \to +\infty} M_2(\R^N) + M_2(\E^N) + M_2(\tcE^N) \\
&\quad = M_2(\R) + M_2(\E) + M_2(\tcE) = M_2(\bomega) ,
\end{align}
we obtain that $\bomega^N \to \bomega$ not only narrowly, but also in the Wasserstein metric.

 We now apply this convergence of $\bomega^N$ to $\bomega$ to conclude the convergence of the LOT approximation to the transport metric with base $\R$. First, we will show
\begin{align}  \label{limsupineq}
\limsup_{N \to +\infty} LOT_{\brho^N,\tbrho^N}(\E^N,\tilde{\E}^N) \leq W_{2, \R}(\E, \tilde{\E}) .
\end{align}
By Jensen's inequality for the probability measures $\bomega^N_{x_1}$,
\begin{align} \nonumber 
&LOT_{\brho^N,\tbrho^N}(\E^N,\tilde{\E}^N) \\ \nonumber
&\quad \leq \left( \iint |x_2 - x_3|^2 d \bomega^N_{x_1}(x_2,x_3)  d \R^N(x_1) \right)^{1/2} \\
&\quad = \left( \int    |x_2 - x_3|^2 d \bomega^N(x_1,x_2,x_3) \right)^{1/2} .
\end{align}
Taking the limsup as $N\to +\infty$ and using the convergence of  $\bomega^N$ to $\bomega = (\id \times \bt_\R^\E \times \bt_\R^\tcE) \# \R$ in the Wasserstein metric gives inequality (\ref{limsupineq}) \cite[Lemma 5.1.7, Proposition 7.1.5]{AGS}.

It remains to show that
\begin{align}  \label{liminfineq}
\liminf_{N \to +\infty} LOT_{\brho^N,\tbrho^N}(\E^N,\tilde{\E}^N) \geq W_{2, \R}(\E, \tilde{\E}) .
\end{align}
Since $\R$ does not give mass to sets of $(d-1)$-dimensional Hausdorff measure, $W_{2, \R}(\E, \tcE) = LOT_{\brho, \brho'}(\E, \tcE) $, and, squaring both sides,  it is equivalent to show
\begin{align}
&\liminf_{N \to +\infty}    \int \left| v^N(x_1) \right|^2 d \R^N(x_1)   \geq   \int \left| v(x_1) \right|^2 d \R(x_1)  ,
\end{align}
where
\begin{align} \nonumber
v^N(x_1)&= \int(x_2 - x_3) d \bomega^N_{x_1}(x_2,x_3) \\
v(x_1)&= \int(x_2 - x_3) d \bomega_{x_1}(x_2,x_3)
\end{align}
Since $\R^N \to \R$ narrowly and $x \mapsto |x|^2$ is convex, this holds as long as $v^N \in L^2(\R^N)$ weakly converge to $v \in L^2(\R)$ \cite[Theorem 5.4.4 (ii)]{AGS}. Indeed, for any $f \in C^\infty_c(\Rd)$, the fact that $\bomega^N \to \bomega$ in the Wasserstein metric ensures
\begin{align} \nonumber
& \lim_{N \to +\infty}   \int f(x_1) v^N(x_1) d \R^N(x_1) \\ \nonumber
 &\quad = \lim_{N \to +\infty} \iint f(x_1) (x_2-x_3) d \bomega^N_{x_1}(x_2,x_3) d \R^N(x_1) \\ \nonumber
 &\quad = \lim_{N \to +\infty} \int f(x_1) (x_2-x_3) d \bomega^N(x_1,x_2,x_3) \\ \nonumber
 &\quad =    \int f(x_1) (x_2-x_3) d \bomega(x_1,x_2,x_3) \\ \nonumber
&\quad = \iint f(x_1) (x_2-x_3) d \bomega_{x_1}(x_2,x_3) d \R(x_1) \\
&\quad =   \int f(x_1) v(x_1) d \R(x_1) .
\end{align}
 \end{proof}
 
 \begin{cor} \label{LOTcor}
Let $\Omega$ be a two dimensional rectangular domain, and consider a sequence of reference measures $\R^N$  given by a   sum of $N^2$ Dirac masses with weights $1/N^2$,  uniformly distributed on    $\Omega$. Then, as $N \to +\infty$, the LOT approximation with base $\R^N$ converges to the transport metric with base $\R$, where $\R$ is the probability measure uniformly distributed on $\Omega$. That is,  for any events $\E, \tcE$, and for any $\brho \in \Gamma_0(\R^N, \E)$, $\tbrho \in \Gamma_0(\R^N, \tcE)$, we have
\begin{align}
&\lim_{N \to _\infty} LOT_{\brho^N, \tbrho^N}(\E, \tcE) = W_{2, \R} (\E, \tcE) .
\end{align} \end{cor}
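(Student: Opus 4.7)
The plan is to derive this corollary as a direct consequence of Proposition \ref{convergenceLOTprop}, applied with the constant sequences $\E^N \equiv \E$ and $\tcE^N \equiv \tcE$. With this choice, the convergence $\E^N \xrightarrow{W_2} \E$ and $\tcE^N \xrightarrow{W_2} \tcE$ is trivial, so all that remains is (i) to verify the hypothesis that $\R$ does not give mass to sets of $(d-1)$-dimensional Hausdorff measure, and (ii) to verify that the specific discrete sequence $\R^N$ converges to $\R$ in the 2-Wasserstein metric.

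For (i), the reference $\R$ is the uniform probability measure on the two-dimensional rectangle $\Omega$, so it is absolutely continuous with respect to two-dimensional Lebesgue measure, with bounded density. Any set of $(d-1)=1$-dimensional Hausdorff measure has vanishing two-dimensional Lebesgue measure, so the hypothesis is satisfied.

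For (ii), I would exhibit an explicit transport plan between $\R^N$ and $\R$ and bound its cost. Write $\R^N = N^{-2} \sum_{i=1}^{N^2} \delta_{y_i}$, where the $y_i$ are arranged on a uniform grid in $\Omega$. Partition $\Omega$ into $N^2$ congruent sub-rectangles $\Omega_i$, each containing exactly one grid point $y_i$ and each of $\R$-mass $1/N^2$. Define the transport plan
\begin{equation}
\bgamma^N = \sum_{i=1}^{N^2} \delta_{y_i} \otimes \R|_{\Omega_i},
\end{equation}
which has first marginal $\R^N$ and second marginal $\R$. Because the diameter of each $\Omega_i$ is $O(1/N)$, the cost satisfies
\begin{equation}
W_2(\R^N,\R)^2 \leq \int |x-y|^2 \, d\bgamma^N(x,y) \leq \sum_{i=1}^{N^2} \frac{1}{N^2} \cdot \frac{C}{N^2} = \frac{C}{N^2},
\end{equation}
for a constant $C$ depending only on $\Omega$. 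Thus $W_2(\R^N,\R) \to 0$, and Proposition \ref{convergenceLOTprop} applies to yield the stated convergence $LOT_{\brho^N,\tbrho^N}(\E,\tcE) \to W_{2,\R}(\E,\tcE)$ for any choice of optimal transport plans.

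Neither step is a genuine obstacle: the absolute continuity of $\R$ is immediate, and the $W_2$ bound is a textbook quantitative estimate obtained by partitioning the domain and pairing each grid point with the cell containing it. The only item that requires any care is ensuring that the partition $\{\Omega_i\}$ has each cell of exactly mass $1/N^2$ and containing exactly the grid point $y_i$, which is straightforward for a uniform rectangular grid aligned with $\Omega$.
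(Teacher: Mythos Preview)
Your proposal is correct and follows exactly the same approach as the paper: apply Proposition \ref{convergenceLOTprop} after checking that $\R^N \xrightarrow{W_2} \R$ and that the uniform measure $\R$ does not give mass to small sets. The paper's proof simply asserts both facts ``by construction'' in one sentence, whereas you spell out the absolute continuity argument and the explicit $O(1/N)$ transport-cost bound; this extra detail is fine but not required.
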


\begin{proof}
Note that, by construction, $\R^N$ converges in the Wasserstein metric  to the probability measure uniformly distributed on $\Omega$, which does not give mass to small sets. Consequently, the result follows from Proposition \ref{convergenceLOTprop}.
\end{proof}

\bibliography{emdbib}

\end{document}